\begin{document}
\bibliographystyle{plain}

\latintext
\title{$(\alpha,\beta)$ Fibonacci Search}
\author{Pavlos S. Efraimidis}

\institute{Department of Electrical and Computer Engineering,\\
Democritus University of Thrace, Building A\\
University Campus, 67100 Xanthi, Greece \\
pefraimi@ee.duth.gr}
% \\
%\vspace{0.2cm} \today}

\maketitle

\begin{abstract}
Knuth~\cite[Page 417]{Kn81} states that ``the (program of the) Fibonaccian search technique looks very mysterious
at first glance'' and that ``it seems to work by magic''. In this work, we show that there
is even more magic in Fibonaccian (or else Fibonacci) search. We present a generalized Fibonacci
procedure that follows perfectly the implicit optimal decision tree for search problems
where the cost of each comparison depends on its outcome.
\end{abstract}

\section{Introduction}
\label{sec:intro}
Binary search is the most common algorithm for comparison-based search
in sorted arrays in which random access is permitted. The worst-case
complexity is logarithmic on the length $n$ of the sequence.
A general information-theoretic lower bound based on decision trees shows
that any comparison-based search algorithm needs in the worst-case at least
logarithmic steps to locate an item in a sorted array.

In the analysis of binary search, the cost of each comparison step
is usually some fixed constant. However, there are natural problems where the
cost of a comparison depends on its outcome.
For example, during the search an overestimation of the unknown value
may cost more than an underestimation or the opposite.
An interesting example is provided by TCP (Transmission Control Protocol) flows
that transmit data over the Internet. Each TCP flow has a parameter
called the congestion window, which is used to regulate its transmission rate.
A small congestion window will cause the flow to use less than the available bandwidth whereas
a large congestion window will cause delays and packet losses in case of
overflows. Most TCP flows use the AIMD (Additive Increase
Multiplicative Decrease) algorithm to adjust their congestion window
to the current network conditions~\cite{KKPS00,ET08}.
The cost induced to a TCP flow for underestimating the optimal congestion window
differs from the cost of overestimating it.

We consider the following toy-example of asymmetric cost binary search.
Assume that we want to experimentally estimate the minimum speed at
which the airbags of a particular car will deploy.
%We are aware that
%there are many parameters and inputs from sensors that influence
%the decision to deploy the airbags in case of a crash.
Assuming that all other parameters of the experiment are held constant
(ceteris paribus assumption),
we want to identify the minimum speed with one decimal digit precision,
at which the airbags deploy. We are given the information that the requested
minimum speed is in the closed interval 10-20 km/h (interval of uncertainty).
Finally, there is asymmetry in the cost of each comparison (crash test).
The damage caused by a crash test in the interval of uncertainty is {500\euro}
if the airbags do not deploy and {1500\euro} if the airbags open. What is the
minimum worst-case cost to identify the requested speed? The solution of this
example is given in Section~\ref{sec:disc}. A similar example is the leaky shower
problem described in~\cite{KR89}.

An amusing example of a closely related search problem is given in the lecture notes
of~\cite{Ra01}. Given two absolutely identical eggs, how many times do we have to
drop an egg to identify how strong the eggs are? In this problem, the unknown value
has to be found with at most two overestimates. The cracking eggs problem is
an example of asymmetric cost binary search where there is a fixed bound on the number
of the outcomes of one type, usually the overestimates. Binary search with a bounded
number of overestimates is discussed in~\cite{BB82,vLSUZ87}, where it is shown
that it is related to binomial trees.

In this work, we consider the problem of searching in a sorted array when
the cost of each comparison depends on its outcome. For example,
in the binary case, each comparison with outcome ``$\leq$'' has a fixed
cost $\alpha$ and each comparison with outcome ``$>$'' a fixed cost $\beta$.
We assume that $\alpha$ and $\beta$ are integers or rational multiples of each other,
such that $\alpha \geq 1$ and $\beta \geq 1$. We then generalize the results
to the case of searching with more than two outcomes.
A problem related to the (binary) search procedure is the construction
of (binary) search trees.
An $O(n)$ time algorithm to construct a binary search
tree for $(\alpha,\beta)$ binary search is presented in~\cite{CW77}.
The results of~\cite{CW77} are based on an interesting relation of the search tree structure
to the Pascal's triangle and are valid for any positive numbers $\beta \geq \alpha > 0$.
The case of $(1,2)$ binary search trees is further discussed in~\cite{ORSW84}
where the relation of the $(1,2)$ binary search problem to the original
Fibonacci sequence is identified. Binary trees with choosable edge lengths
are discussed in~\cite{MR09}.
A general treatment of binary search with asymmetric costs is given in~\cite{KR89}.
The costs of binary search in~\cite{KR89} can be real numbers, but the search
procedure proposed is only near optimal. The same problem is studied in~\cite{Hi90}
with tools from dynamic programming. Tree structures for searching with asymmetric
costs when the number of outcomes of each comparison may be larger than two is
discussed in~\cite{CG01} where optimal lopsided trees are analyzed.

The focus of this work is on a lightweight search algorithm for asymmetric cost search
and not on the explicit construction of the corresponding search tree.
As such, the main contribution of this work is an optimal lightweight search procedure
for $(\alpha,\beta)$ binary search without the explicit construction of the binary
search tree. Furthermore, we show that the results of this work are valid for search
trees of higher degrees.
The generalization of the Fibonacci sequences used in this work seems to naturally
fit the search problem and gives simple, optimal results.
Even though the problem has been studied in the past, no algorithm
that is both optimal with respect to its running time and the cost of
the generated solution has been presented.

% for the asymmetric cost searching with
%weights that are integer or rationally related
%\footnote{Searching
%is one of the earliest problems that has been studied in the field of Algorithms.
%The literature (or at least part of it) of the early days of Computer Science is
%sometimes not easy to find and access. After thorough research,
%to the best of our knowledge, the problem discussed in this paper
%has not been addressed before.}.

%Furthermore, we assume that $\alpha \neq \beta$, because otherwise the problem degenerates to
%normal binary search. XXXX However, our algorithms and their analysis
%are still valid (Are they ?).

%In our results we will encounter a generalization of Fibonacci sequences.
Interestingly, there is an early search algorithm called Fibonacci search
or Fibonaccian search~\cite{Fe60,Kn81}, which uses the classic Fibonacci
sequence to lead the item selection procedure in the sorted array\footnote{The term Fibonacci search is also used
in the literature for a technique that locates the minimum of a unimodal function
in a given interval. See for example~\cite{Av03}. This interpretation of the
term Fibonacci search is not related to our work.}. Fibonacci search
is a variation of binary search that avoids the division operations
required in binary search to select the middle item. At that early days
of computer science a division by two or even a simple shift operation could be
an undesirably expensive operation for a computer program.

Knuth~\cite[Page 417]{Kn81} states that ``the (program of the) Fibonaccian search technique looks very mysterious
at first glance'' and that ``it seems to work by magic''. In this work, we show that there
is even more magic in Fibonaccian (or else Fibonacci) search. The generalized Fibonacci
procedure of this work follows perfectly the implicit optimal decision tree for search problems
where the cost of each comparison depends on its outcome.

{\bf Contribution.}
We present $(\alpha,\beta)$ Fibonacci search, an optimal lightweight search procedure for
searching with asymmetric costs. The algorithm is simple and intuitive and the first, to our
knowledge, optimal algorithm for a classic problem when the comparison costs are fixed integers or
rationally multiples of each other.
We define Fibonacci decision trees and use them to prove the information theoretic lower
bounds of the search problem. Even though this structure exists implicitly in
more general works like~\cite{CG01}, its significance for search and decision problems has not been
sufficiently stressed so far.
Moreover, we present and solve an interesting guessing game with $h$ outcomes per comparison
and discuss applications of it in prefix codes. Finally, we present an abstract class definition
and an implementation of the search algorithm.

The rest of this work is organized as follows: The asymmetric cost search problem is
defined in Section~\ref{sec:abs}. In Section~\ref{sec:Sequences},
Fibonacci-like sequences are defined. The lower bound on the worst-case cost is
presented in Section~\ref{sec:bound}. The $(\alpha,\beta)$ Fibonacci search algorithm is
described in Section~\ref{sec:alg}. Applications of $(\alpha,\beta)$ Fibonacci search
are presented in Section~\ref{sec:Apps} and a final discussion is given in Section~\ref{sec:disc}.

\section{$(\alpha,\beta)$ Binary Search}
\label{sec:abs}
The formal definition of the asymmetric binary search problem follows.
\begin{definition}
The $(\alpha,\beta)$ binary search problem. An array $V$ of $n$ items sorted in increasing order is given. Each item of
the array can be accessed in time $O(1)$. For any value $x$ and any item $v_k$ of $V$
the cost of the comparison $x \leq v_k$ is
\begin{center}
\begin{math}
\mbox{cost of } (x \leq v_k) = \left\{
\begin{array}{ll}
\alpha, \; \mbox{ if the outcome if true, } \\
\beta, \; \mbox{ if the outcome is false. }
\end{array}
\right .
\end{math}
\end{center}
Parameters $\alpha$ and $\beta$ are strictly positive integers.
The cost of a search is equal to the sum of the costs of all comparison
operations $x \leq v_k$ performed during the search.

We assume that the searched item $x$ is in the array $V.$ The results of this work
can be extended to searches for items that may not belong to $V$ by comparing at the end
of the search the requested item with the item found.
\end{definition}

{\bf Cost and complexity.} The cost metric in the search procedure can be used to model
diverse criteria. For example in~\cite{ORSW84} the cost corresponds to computational complexity;
underestimation costs one processor operation and overestimation two processor operations.
The cost can also correspond to other metrics like the budget in the airbags example
or some efficiency criterion in the TCP flow example.

\section{Sequences}
\label{sec:Sequences}

In the analysis of the lower bound and the search algorithm for the $(\alpha,\beta)$ search problem
we will use a generalization of Fibonacci sequences. In this Section we provide the
necessary definitions of the sequences that we will use.
We start with the well-known Fibonacci sequence.

\begin{definition}
The Fibonacci sequence is given by the recurrence relation
\begin{equation}
f(k) = f(k-1) + f(k-2) \quad ,
\end{equation}
with initial values $f(k) = 0$, for $k \leq 0$, and $f(1) = 1$.
\end{definition}

\noindent
We define the following generalization of the Fibonacci sequence
where each term is the sum of two preceding terms, which however may
not be the immediately preceding terms.

\begin{definition}
\label{equ:g()}
The $(\alpha,\beta)$ Fibonacci sequence.
Given integers $\alpha \geq 1$ and $\beta \geq 1$, the $(\alpha,\beta)$
Fibonacci sequence is given by the recurrence relation
\begin{equation}
g(k) = g(k-\alpha) + g(k-\beta) \quad ,
\end{equation}
with initial values, unless otherwise specified, $g(k) = 0$, for $k < 0$, and $g(0) = 1$.
\end{definition}

\noindent
The $(\alpha,\beta)$ Fibonacci sequence is actually a family of generalized
Fibonacci sequences. Some well known sequences that are related to
members of the $(\alpha,\beta)$ Fibonacci family are:
\begin{itemize}
\renewcommand{\labelitemi}{$\bullet$}
\item The classic Fibonacci sequence for $\{\alpha,\beta\} = \{1,2\}$ and initial values $f(0) = 0$ and $f(1) = 1$.
\item The Padovan sequence and the Perrin sequence for $\{\alpha,\beta\} = \{2,3\}$
and initial values $f(0) = f(1) = f(2) = 1$ and $f(0) = 3, f(1) = 0, f(2) = 2$, respectively.
\end{itemize}

\noindent
Given the parameters $\alpha$ and $\beta$ of an $(\alpha,\beta)$
Fibonacci sequence, let
\begin{equation}
\ell = \min\{\alpha,\beta\} \mbox{ and } u = \max\{\alpha,\beta\} \quad .
\end{equation}

\noindent
In the analysis of the $(\alpha,\beta)$ binary search problem we will encounter
the following sequence, in which the term $k$ of the sequence is equal to the sum of the
last $\ell$ terms of the corresponding $(\alpha,\beta)$
Fibonacci sequence.

%\begin{equation}
%G(k) = \sum_{i=1}^{\ell} g(k-i) \quad .
%\end{equation}

\begin{definition}
Given integers $\alpha \geq 1$ and $\beta \geq 1$, let $\ell = \min\{\alpha,\beta\}$.
Then let
\begin{equation}
\label{equ:G()}
G(k) = \sum_{i=1}^{\ell} g(k+1-i) \quad ,
\end{equation}
where $g(k)$ is the $(\alpha,\beta)$ Fibonacci sequence.
\end{definition}

\noindent
It turns out that the $G(k)$ sequence is also an $(\alpha,\beta)$ Fibonacci
sequence but with its own initial values.

\begin{proposition}
$G(k)$ is an $(\alpha,\beta)$ Fibonacci sequence.
\end{proposition}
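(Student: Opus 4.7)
The plan is to exploit the linearity of the defining recurrence. Because $g(k) = g(k-\alpha)+g(k-\beta)$ is a linear recurrence with constant coefficients, the set of sequences satisfying it at valid indices is closed under shifts and linear combinations. Since $G$ is, by construction, a sum of shifts of $g$, it should inherit the recurrence essentially for free.

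First I would pin down the range of indices on which the recurrence for $g$ genuinely holds. Given the stated initial conditions ($g(k)=0$ for $k<0$ and $g(0)=1$), the equation $g(m)=g(m-\alpha)+g(m-\beta)$ defines $g$ for $m\ge 1$; note that it fails at $m=0$, since the initial value $g(0)=1$ overrides the formal recurrence $g(-\alpha)+g(-\beta)=0$. Hence the recurrence for $g$ is safely applicable exactly for $m \ge 1$.

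Next, for any $k\ge\ell$ every index appearing inside the sum, $k+1-i$ with $1\le i\le\ell$, satisfies $k+1-i \ge 1$, so I may apply the recurrence term by term and reindex:
\begin{align*}
G(k) &= \sum_{i=1}^{\ell} g(k+1-i)
     = \sum_{i=1}^{\ell}\bigl(g(k+1-i-\alpha)+g(k+1-i-\beta)\bigr) \\
     &= \sum_{i=1}^{\ell} g\bigl((k-\alpha)+1-i\bigr) + \sum_{i=1}^{\ell} g\bigl((k-\beta)+1-i\bigr) \\
     &= G(k-\alpha) + G(k-\beta).
\end{align*}
This shows $G$ obeys the $(\alpha,\beta)$ Fibonacci recurrence for $k\ge\ell$. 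The values $G(0),\ldots,G(\ell-1)$, obtained directly from the definition and the known values of $g$, then serve as the ``own initial values'' referenced in the proposition.

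There is no real obstacle here; the argument is essentially index bookkeeping plus linearity, and it closes in a single calculation once the admissible range of $k$ is identified. The only point that demands minor care is confirming that none of the indices $k+1-i$ dip into the region $m\le 0$ where the recurrence for $g$ is overridden by initial conditions; the condition $k\ge\ell$ suffices precisely for this.
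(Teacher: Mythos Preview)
Your proof is correct and follows exactly the paper's approach: apply the recurrence for $g$ term by term inside the sum defining $G$ and regroup into $G(k-\alpha)+G(k-\beta)$. The paper is terser (omitting your careful discussion of the admissible range $k\ge\ell$) but supplements the computation with the explicit initial data $G(k)=0$ for $k<0$ and $G(k)=1$ for $0\le k<u$, which you leave implicit.
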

\begin{proof}
\begin{equation}
\nonumber
G(k) = \sum_{i=1}^{\ell} g(k-i+1) = \sum_{i=1}^{\ell} g(k-i+1 - \alpha) + \sum_{i=1}^{\ell} g(k-i+1 - \beta)
\end{equation}
\begin{equation}
\nonumber
\Rightarrow G(k) = G(k-\alpha) + G(k-\beta) \quad .
\end{equation}
The appropriate initial values for $G(k)$ are $G(k) = 0$ for $k < 0$ and $G(k) = 1$ for $0 \leq k < u$.
\end{proof}

\noindent
Note that if $\ell = 1$ then, as expected, the sequences $G(k)$ and $g(k)$ coincide
(for the same parameters $\alpha$ and $\beta$).

Closed form expressions are known for example for the terms of the classic Fibonacci sequence,
the Padovan sequence and the Perrin sequence. Since every member of the $(\alpha,\beta)$ Fibonacci
sequences is defined by a linear recurrence, there exists a closed-form solution for each sequence.
However, we are not aware of a general closed form expression for the complete $(\alpha,\beta)$ Fibonacci family.
%Off course it would be interesting to come up with a closed form expression for the performance
%of the $(\alpha,\beta)$ Fibonacci search algorithm, but this presupposes general
%closed forms for $(\alpha,\beta)$ Fibonacci sequences which may not be easy to attain.
%There is however, an approximation of the terms of $(\alpha,\beta)$ Fibonacci sequences known
%from~\cite{KR89,Hi90} and previous work referenced therein. We will use this approximation in XXXX
%In this context, Johannes Kepler already observed that the ratio of consecutive Fibonacci
%numbers $(f(k+1)/f(k))$ converges to the golden ratio $\phi \approx 1.6180340$~\cite{wikipedia:Fibonacci_number}.
%This gives that the solution of $(1,2)$ or $(2,1)$ Fibonacci search has worst-case
%cost $O(\log_\phi n)$, while normal binary search $O(2 \cdot \log_2 n)$.
%For the Padovan and the Perrin sequences, despite their different initial values, the increase ratio of
%both sequences converges to the plastic number $p \approx 1.3247179$.
%Thus, $(1,3)$ or $(3,1)$ Fibonacci search gives solutions of worst-case
%cost $O(\log_p n)$, while normal binary search $O(3 \cdot \log_2 n)$.
Closed form estimations of terms of generalized Fibonacci sequences are given for
example in~\cite{CG01,Wo98,Hi90,KR89}. The following Proposition
follows directly from Lemma~1 of~\cite{Hi90} and provides upper and lower bounds
on the terms of the $G(k)$ Fibonacci sequence.
%\footnote{Johannes Kepler already observed that the ratio of consecutive Fibonacci
%numbers $(f(k+1)/f(k))$ converges to the golden ratio $\phi \approx 1.6180340$~\cite{wikipedia:Fibonacci_number}.
%This gives that the cost of $(1,2)$ or $(2,1)$ Fibonacci tree has worst-case
%cost $O(\log_\phi n)$, while normal binary search $O(2 \cdot \log_2 n)$.
%For the Padovan and the Perrin sequences, despite their different initial values, the increase ratio of
%both sequences converges to the plastic number $p \approx 1.3247179$.
%Thus, $(1,3)$ or $(3,1)$ Fibonacci search gives solutions of worst-case
%cost $O(\log_p n)$, while normal binary search $O(3 \cdot \log_2 n)$}.

\begin{proposition}
\label{prop:approxG()}
For strictly positive integers $\alpha$ and $\beta$ it holds
\begin{equation}
\label{equ:approxG()}
2^{\frac{k+1-u}{u}} \leq G(k) \leq 2^{\frac{k+2-u}{\ell}} \quad .
\end{equation}
\end{proposition}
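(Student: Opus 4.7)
The plan is to establish both inequalities by a single strong induction on $k$, using only the recurrence $G(k) = G(k-\ell) + G(k-u)$ and the initial values $G(k)=1$ for $0 \le k < u$ that were fixed in the preceding proposition. Treating the two bounds in parallel keeps the bookkeeping short because both proofs share the same inductive skeleton.

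For the base case $0 \le k < u$, both inequalities reduce to monotonicity-of-$2^x$ checks: since $k+1-u \le 0$, we have $2^{(k+1-u)/u} \le 1 = G(k)$, and since $k+2-u \ge 1$, we have $G(k) = 1 \le 2^{(k+2-u)/\ell}$. For the inductive step I restrict to $k \ge u$, which makes both $k-\ell$ and $k-u$ nonnegative, so that the inductive hypothesis is available at both arguments of the recurrence. Substituting the hypothesis into the recurrence and factoring out the target power of two, the upper bound becomes
$$G(k) \le 2^{(k+2-u)/\ell}\bigl(2^{-1} + 2^{-u/\ell}\bigr),$$
and the parenthesised factor is at most $1$ precisely because $u \ge \ell$ forces $2^{-u/\ell} \le 1/2$. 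The lower bound argument is its mirror image: it collapses to verifying $2^{-\ell/u} + 2^{-1} \ge 1$, which holds because $\ell \le u$ yields $2^{-\ell/u} \ge 1/2$.

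The main difficulty, I expect, is not the algebra but the bookkeeping around the initial segment: one has to confirm that the induction covers every index where the inequalities are meant to hold, and that whenever $k \ge u$ both $G(k-\ell)$ and $G(k-u)$ have already been bounded, either by the base strip $[0,u)$ or by a previous step of the induction. Once that is settled, the entire proof turns on the single pair of algebraic inequalities $2^{-1} + 2^{-u/\ell} \le 1 \le 2^{-1} + 2^{-\ell/u}$, which is precisely why the denominators $u$ and $\ell$ show up in the exponents of the stated bounds; indeed, these are the sharpest exponents for which an induction of this shape closes.
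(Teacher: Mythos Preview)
The paper does not actually prove this proposition; it simply records that the bounds ``follow directly from Lemma~1 of~\cite{Hi90}''. Your strong induction is therefore an independent, self-contained argument, and for the lower bound it is entirely correct: on the base strip $0\le k<u$ one has $G(k)=1\ge 2^{(k+1-u)/u}$ because $k+1-u\le 0$, and the inductive step closes via $2^{-\ell/u}+2^{-1}\ge 1$, exactly as you say.

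The upper bound, however, has a genuine gap in the base case. You assert that for $0\le k<u$ one has $k+2-u\ge 1$, but this fails for every $k\le u-2$. This is not merely a bookkeeping slip: the inequality you are trying to establish is actually \emph{false} at those indices. For example, with $\ell=1$ and $u=3$ one gets $G(0)=1$ while $2^{(0+2-3)/1}=2^{-1}=\tfrac12$, so $G(0)\le 2^{(k+2-u)/\ell}$ does not hold. Consequently no induction anchored on the full strip $[0,u)$ can succeed, and the proposition must be read with an implicit lower cutoff on $k$ inherited from Hinderer's lemma. If you rebase the induction at $k\ge u-2$, the two base values $k=u-2$ and $k=u-1$ do satisfy the bound; but then, for $u\ge 3$, the inductive step at any $k\in[u,2u-3]$ invokes $G(k-u)$ with $k-u\le u-3$, which lies below the restricted base and so is not covered by the hypothesis. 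Those finitely many indices need a separate direct check before your clean algebraic inequality $2^{-1}+2^{-u/\ell}\le 1$ can carry the induction for all larger $k$.
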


\section{The Lower Bound}
\label{sec:bound}
We present an information-theoretic lower bound on the worst-case cost
of the $(\alpha,\beta)$ binary search problem. Like the known lower bounds
for binary search, the approach is based on decision trees. To handle
the asymmetry of the comparison costs we define a special type of decision
tree.

\begin{definition}
The $(\alpha,\beta)$ decision tree. The root node of the tree has level $0$.
Let $v$ be a node at level $d_v$ of the tree. Then\\
%\begin{center}
\begin{math}
\mbox{ node } v \left\{
\begin{array}{ll}
\mbox{ is either a leaf of the tree, } \\
\mbox{ or it has a left child at level $(d_v+\alpha)$ and a right child at level $(d_v+\beta)$. }
\end{array}
\right .
\end{math}
%\end{center}
\end{definition}

\noindent
Note that the term level is used to represent the weighted depth of the tree nodes.
Each level of the tree corresponds to a particular cost value.

\begin{definition}
The level of an $(\alpha,\beta)$ decision tree is the maximum level
of any of its nodes.
The depth of a node is the common depth of tree nodes, i.e., the number
of links from the root to the node. The depth of an $(\alpha,\beta)$ decision
tree is the maximum depth of any of its nodes.
\end{definition}

\noindent
The distinguishing property of $(\alpha,\beta)$ decision trees is
that nodes which have the same parent may be located at different
tree levels. The general form of an $(\alpha,\beta)$ decision tree is
shown in Figure~\ref{fig:A-BTree}. Two instances of $(\alpha,\beta)$ decision trees
are presented in Figure~\ref{fig:abFibTree}.

%\begin{figure}[!h]
%\begin{minipage}[b]{\linewidth}
%\centering
% %\includegraphics[width=\textwidth]{SimpleTrade.eps}
% \includegraphics[width=0.2\textwidth]{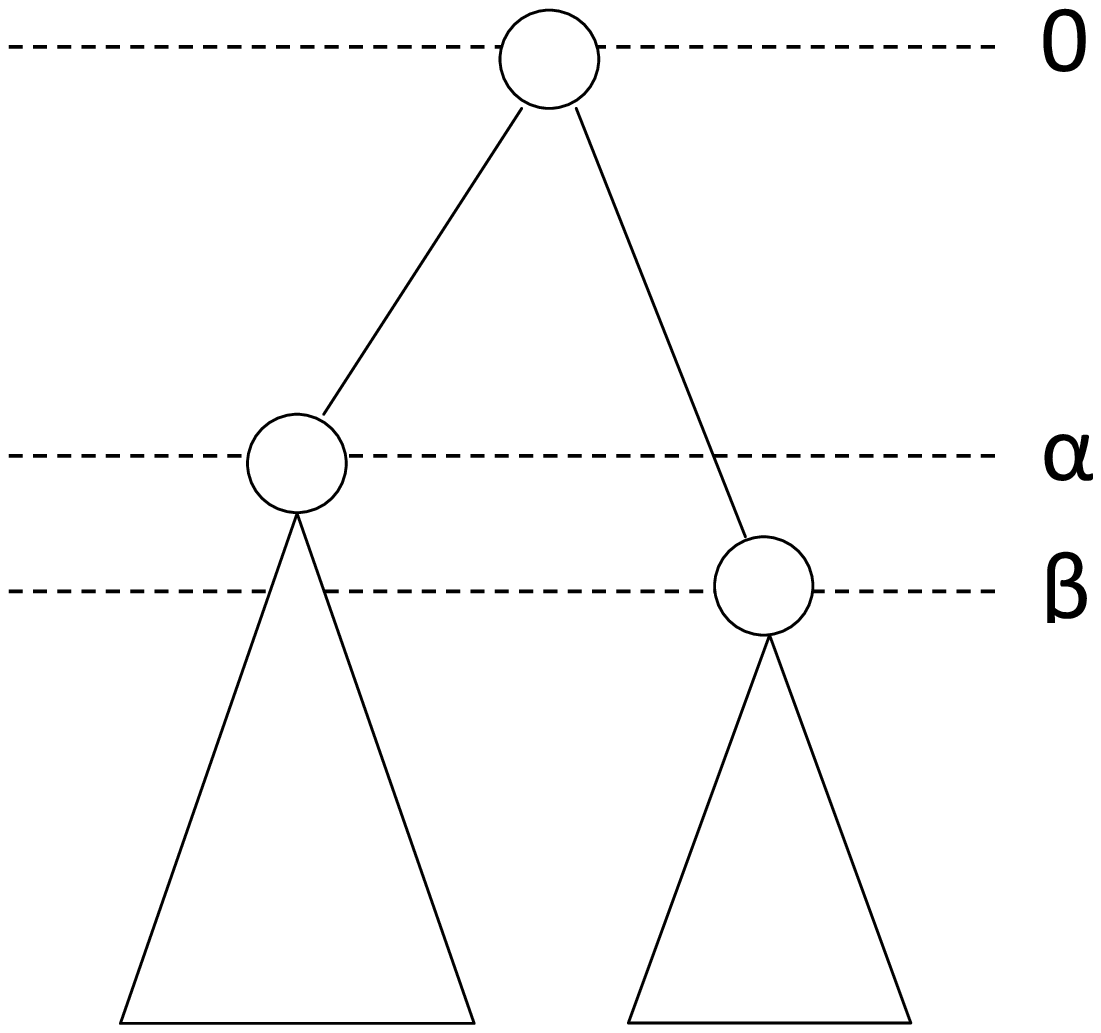}
%\caption{The $(\alpha,\beta)$ decision tree} \label{fig:A-BTree}
%\end{minipage}
%\end{figure}

\begin{figure}[!ht]
\centering
\subfloat[An $(\alpha,\beta)$ decision tree.]
%{\label{fig:hyb1}\includegraphics[viewport=40 400 760 750,width=7.3cm,height=4cm, clip]{figureHyb1.eps}}
{\label{fig:A-BTree}\includegraphics[width=0.35\textwidth, height=0.25\textwidth]{figA-BTree.eps}}
\hspace{0.1 \textwidth}
\subfloat[An h-decision tree for h=3.]
%{\label{fig:hyb2}\includegraphics[viewport=40 400 760 750,width=7.3cm,height=4cm, clip]{figureHyb2.eps}}
{\label{fig:HTree}\includegraphics[width=0.35\textwidth, height=0.25\textwidth]{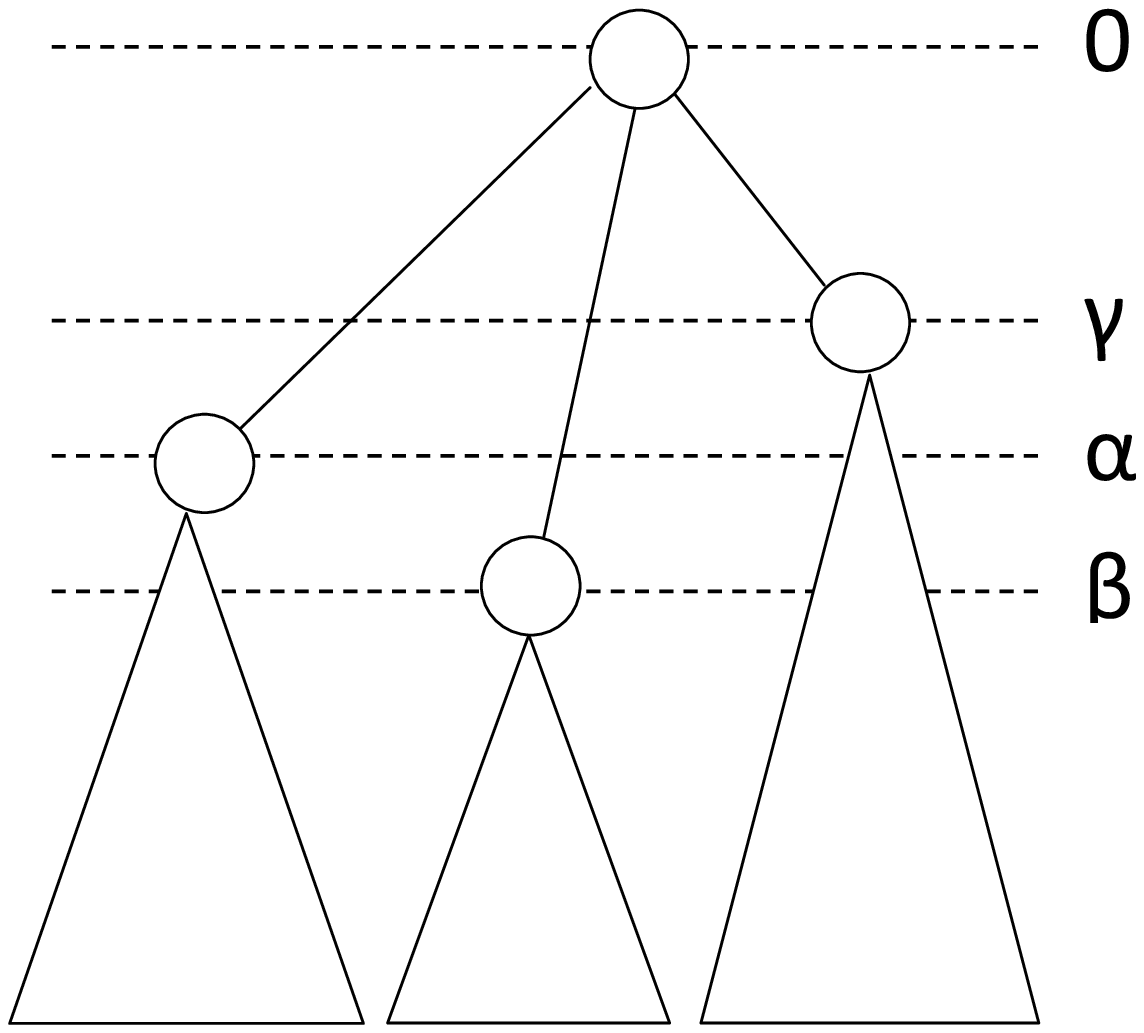}}
\caption{Two instances of asymmetric cost decision trees.} \label{fig:DecisionTrees}
\end{figure}

\begin{proposition}
The number of nodes at level $k$ of a complete $(\alpha,\beta)$ decision tree
is the value of term $g(k)$ of corresponding $(\alpha,\beta)$ Fibonacci sequence
of Equation~\ref{equ:g()}.
\end{proposition}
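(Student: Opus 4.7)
The plan is to prove the statement by strong induction on $k$. Let $N(k)$ denote the number of nodes at level $k$ in a complete $(\alpha,\beta)$ decision tree, and I want to show $N(k) = g(k)$ for every integer $k$.

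For the base cases I would handle $k < 0$ and $k = 0$ directly. No node can be at a negative level, so $N(k) = 0 = g(k)$ for $k < 0$. At level $0$ there is exactly the root, giving $N(0) = 1 = g(0)$.

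For the inductive step, fix $k > 0$ and assume the claim holds for all smaller indices. The key combinatorial observation is that every node at level $k$ has a unique parent (since it is not the root), and that parent is either at level $k-\alpha$ (if the node is a left child) or at level $k-\beta$ (if the node is a right child). Conversely, because the tree is complete, every node at level $k-\alpha$ contributes exactly one left child at level $k$, and every node at level $k-\beta$ contributes exactly one right child at level $k$. These two contributions are disjoint (left children are distinct from right children, even when $\alpha=\beta$), so
\begin{equation}
\nonumber
N(k) \;=\; N(k-\alpha) + N(k-\beta).
\end{equation}
Applying the inductive hypothesis to each summand and then invoking the recurrence of Definition~\ref{equ:g()} immediately yields $N(k) = g(k-\alpha) + g(k-\beta) = g(k)$.

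There is essentially no deep obstacle here; the one place to be careful is in the small-$k$ regime, where $k-\alpha$ or $k-\beta$ may be negative. The base case $N(k)=g(k)=0$ for $k<0$ is precisely what makes the recurrence work uniformly (for example, when $\beta>\alpha$ and $k=\alpha$, the root's right child does not yet exist, and indeed $g(k-\beta)=0$). The only other point worth an explicit sentence is the $\alpha=\beta$ case, where a single parent contributes two children at level $k$; the identification $N(k-\alpha)+N(k-\beta) = 2N(k-\alpha)$ still matches the doubled count of children, so the argument goes through unchanged.
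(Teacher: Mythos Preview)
Your proof is correct and follows essentially the same approach as the paper's: the paper simply observes that the nodes at level $k$ are exactly the children of the nodes at levels $k-\alpha$ and $k-\beta$, together with the base case that the root is the unique node at level $0$. Your version spells out the induction and the boundary cases more carefully, but the underlying argument is identical.
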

\begin{proof}
The number of nodes at level $k$ is equal to the sum of the number of
nodes at levels $(k-\alpha)$ and $(k-\beta)$. Furthermore, there
is one node at level $0$, the root node.
\end{proof}

\begin{proposition}
\label{prop:G-bound}
The number of leaf nodes of a complete $(\alpha,\beta)$ decision tree of level
$k$ is $G(k)$, where $G(k)$ is the $(\alpha,\beta)$ Fibonacci sequence of
Equation~\ref{equ:G()}.
\end{proposition}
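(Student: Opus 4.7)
The plan is induction on $k$, leveraging the recurrence $G(k) = G(k-\alpha) + G(k-\beta)$ together with the initial values $G(k) = 0$ for $k < 0$ and $G(k) = 1$ for $0 \le k < u$ established in the preceding proposition. The key setup step is to pin down what \emph{complete} should mean here: every internal node has both of its children, all nodes lie at level at most $k$, and the tree is expanded whenever this is possible. Under this reading, a node at level $d$ is internal iff $d + u \le k$ (both children fit within level $k$) and a leaf iff $d + u > k$.

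For the base case, I would take $0 \le k < u$. The root, at level $0$, satisfies $0 + u > k$, so it cannot be internal and must itself be the unique leaf, giving $1 = G(k)$ leaves; for $k < 0$ the tree is empty, matching $G(k) = 0$. For the inductive step, I would take $k \ge u$ and assume the claim for all smaller levels. The root is then internal, with children at levels $\alpha$ and $\beta$. Re-indexing each of the two subtrees so that its own root sits at level $0$, the internality criterion $d + u \le k$ in the original tree translates to $d' + u \le k - \alpha$ (respectively $k - \beta$) in the shifted subtree, so each subtree is itself a complete $(\alpha,\beta)$ decision tree of level $k-\alpha$ (respectively $k-\beta$). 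The inductive hypothesis then yields $G(k-\alpha) + G(k-\beta) = G(k)$ total leaves.

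The main obstacle, modest as it is, lies in the re-rooting step: verifying that the restriction of a complete level-$k$ tree to a child's subtree is again complete of the appropriately reduced level. Once the translation between the two internality conditions is spelled out, the induction is entirely routine and the previous proposition supplies both the recurrence and the base values that $G$ must satisfy.
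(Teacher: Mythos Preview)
Your proof is correct, but it follows a different route from the paper's. The paper argues directly from the summation form $G(k)=\sum_{i=1}^{\ell} g(k+1-i)$: in the complete tree the nodes at level $k$ are leaves, and so is every node at levels $k-1,\ldots,k-\ell+1$, since even its nearer child would already exceed level $k$; summing the per-level node counts $g(\cdot)$ over these $\ell$ levels (via the preceding proposition) gives $G(k)$ by definition. Your argument instead inducts on $k$, splitting at the root and invoking the recurrence $G(k)=G(k-\alpha)+G(k-\beta)$ together with the initial values. Your approach is more self-contained---it never appeals to the per-level count $g(\cdot)$---and is explicit about what ``complete'' means (your criterion $d+u\le k$ for internality indeed produces a bona fide $(\alpha,\beta)$ decision tree). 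The paper's approach is shorter and, as a bonus, tells you structurally where the leaves live; note, however, that its picture of ``leaves in the top $\ell$ levels'' implicitly counts within the truncation of the infinite tree, which may contain one-child nodes at levels $k-u+1,\ldots,k-\ell$. Both viewpoints are consistent and yield the same count $G(k)$.
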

\begin{proof}
The nodes of level $k$ are of course leaf nodes of the tree. Additionally,
any node with level in $(k-1), (k-2), \dots, (k+1-\ell)$ must also be
a leaf node since the node cannot have any descendants with level at most $k$.
\end{proof}

%\begin{figure}[!h]
%\begin{minipage}[b]{\linewidth}
%\centering
% %\includegraphics[width=\textwidth]{SimpleTrade.eps}
% \includegraphics[width=\textwidth]{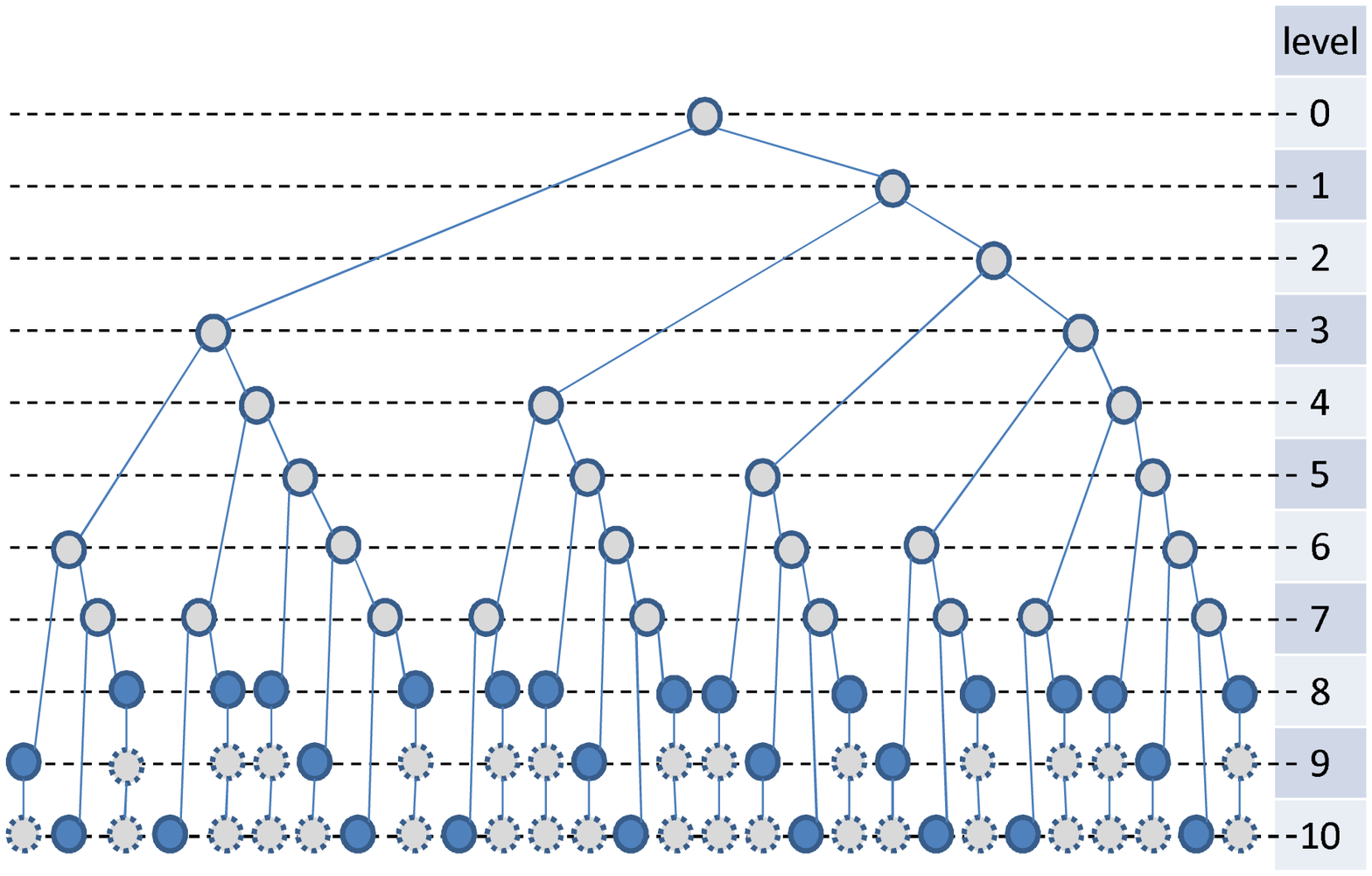}
%\caption{The $(3,1)$ decision tree with level 10. The maximum number of leaves
%of such a tree is equal to the number of nodes at level 10.
%The dark nodes are nodes where the search ends.} \label{fig:A-BTree3-1}
%\end{minipage}
%\end{figure}
%
%\begin{figure}[!h]
%\begin{minipage}[b]{\linewidth}
%\centering
% %\includegraphics[width=\textwidth]{SimpleTrade.eps}
% \includegraphics[width=\textwidth]{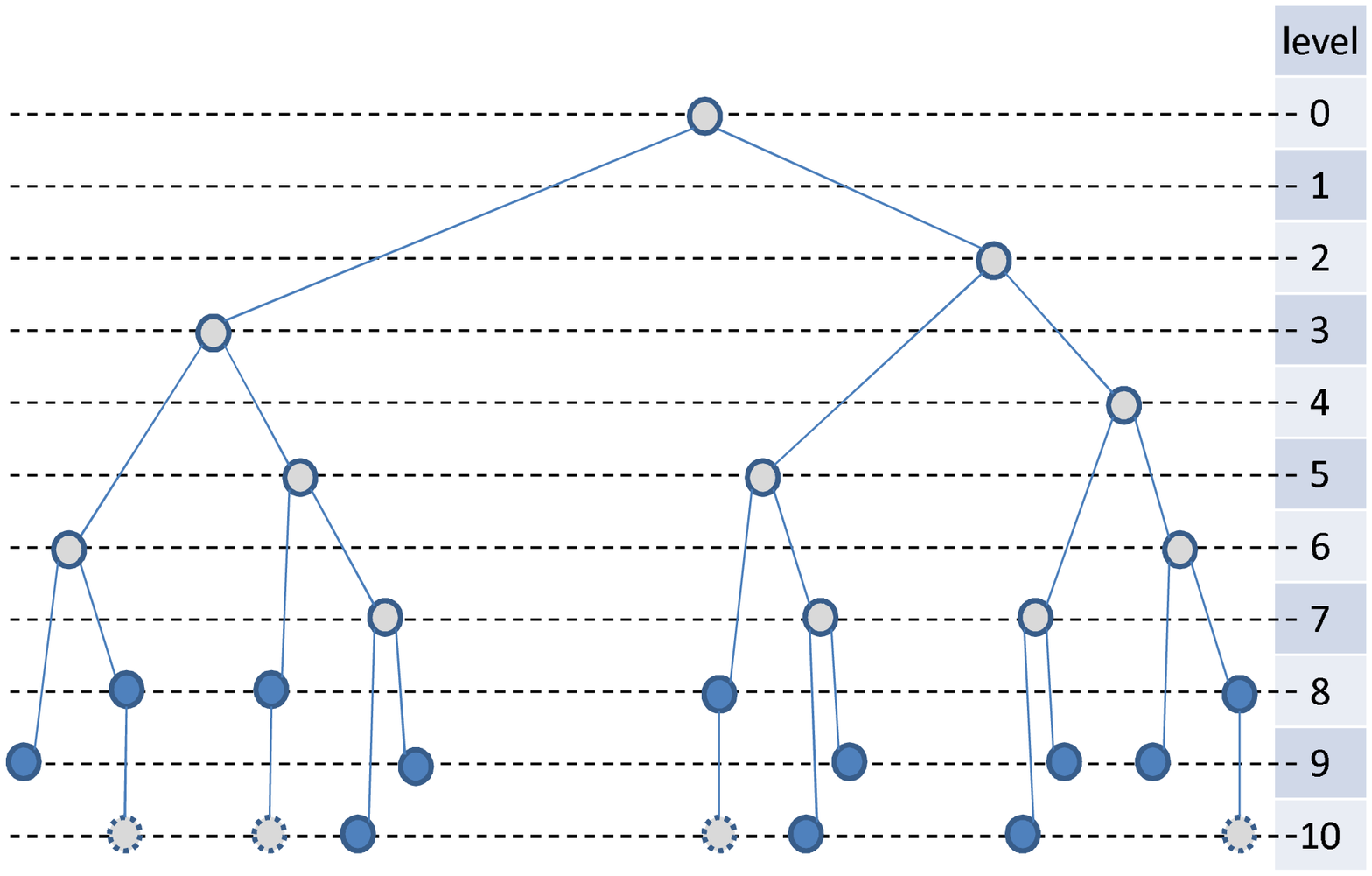}
%\caption{The $(3,2)$ decision tree with level 10.
%The maximum number of leaves of such a tree is equal to the number of nodes at
%levels 9 and 10. The dark nodes are nodes where the search ends.} \label{fig:A-BTree3-2}
%\end{minipage}
%\end{figure}

\begin{proposition}
The wort-case cost for an $(\alpha,\beta)$ binary search problem with $n$ items
is at least $k$, where $k$ is the minimum index such that $G(k) \geq n$.
%term of the $(\alpha,\beta)$ Fibonacci sequence
\end{proposition}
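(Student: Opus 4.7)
\bigskip
\noindent
\textbf{Proof plan.} The plan is to follow the standard information-theoretic argument, but executed against $(\alpha,\beta)$ decision trees rather than ordinary binary decision trees. First I would fix an arbitrary deterministic search algorithm for an input of size $n$ and associate with it an $(\alpha,\beta)$ decision tree in the canonical way: each internal node is labelled by the comparison $x \leq v_k$ that the algorithm performs at that stage, its left subtree corresponds to the outcome ``true'' (which incurs cost $\alpha$ and therefore sits at depth $+\alpha$ in the level metric) and its right subtree corresponds to ``false'' (cost $\beta$, level $+\beta$); leaves correspond to the final answers produced. Since the algorithm must be able to distinguish all $n$ possible positions of $x$ in $V$, the resulting tree must have at least $n$ leaves. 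By construction, the worst-case cost of the algorithm equals the maximum level of any leaf in this tree.

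Next, the core combinatorial step is to bound the number of leaves of any $(\alpha,\beta)$ decision tree whose maximum level is at most some value $m$. I would prove by induction on $m$ that every such tree has at most $G(m)$ leaves. The base cases are $m<0$ (no tree) and trees consisting of a single leaf, where the claim holds because $G$ takes value $1$ on the initial range $0\le m<u$. For the inductive step, if the root has children, they sit at levels $\alpha$ and $\beta$, and the subtrees rooted at those children are themselves $(\alpha,\beta)$ decision trees of maximum level at most $m-\alpha$ and $m-\beta$ respectively; by the induction hypothesis they contribute at most $G(m-\alpha)$ and $G(m-\beta)$ leaves, whose sum is exactly $G(m)$ by the recurrence established for the $G$ sequence.

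Finally I would combine the two pieces: if the worst-case cost of the algorithm is $k$, then its decision tree has maximum level $k$ and hence at most $G(k)$ leaves, while it must have at least $n$ leaves, so $G(k)\ge n$. Because $G$ is nondecreasing in its argument (each new term is a sum of earlier nonnegative terms), this forces $k$ to be at least the smallest index $k^\ast$ with $G(k^\ast)\ge n$, which is exactly the claim. I expect the only delicate step to be the inductive bound of the second paragraph: one has to be careful that the recurrence $G(m)=G(m-\alpha)+G(m-\beta)$ together with the stated initial values really covers the small-$m$ cases (in particular $0\le m<u$, where one of $G(m-\alpha), G(m-\beta)$ is zero), so that the inductive sum never overcounts or undercounts leaves for trees whose root happens to be a leaf because no child would fit under the level budget.
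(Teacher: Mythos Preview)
Your proposal is correct and follows essentially the same decision-tree argument as the paper: model an arbitrary algorithm by an $(\alpha,\beta)$ decision tree, bound the number of leaves of any such tree of level $m$ by $G(m)$, and conclude. The only minor difference is that the paper obtains the leaf bound by invoking Proposition~\ref{prop:G-bound} (counting the leaves of the \emph{complete} tree as the nodes in the top $\ell$ levels, i.e.\ $\sum_{i=1}^{\ell} g(k+1-i)=G(k)$), whereas you prove the same bound directly by induction on $m$ using the recurrence $G(m)=G(m-\alpha)+G(m-\beta)$; both routes are equally valid and yield the same conclusion.
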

\begin{proof}
Any search procedure for an $(\alpha,\beta)$ binary search problem
can be represented with a corresponding $(\alpha,\beta)$ decision tree.
The level of the decision tree
corresponds to the worst-case cost of the respective search procedure.
From Proposition~\ref{prop:G-bound} we know that any  $(\alpha,\beta)$
decision tree with level up to $k-1$ can have at most $G(k-1)$ leaves.
Hence any search procedure will have worst-case cost at least $k$,
where $k$ is such that $G(k) \geq n$.
\end{proof}

\section{$(\alpha,\beta)$ Fibonacci Search}
\label{sec:alg}

We propose $(\alpha,\beta)$ Fibonacci search, an algorithm for the
$(\alpha,\beta)$ binary search problem.
The proposed algorithm is a dichotomic search algorithm like binary search but
uses the $G(k)$ sequence numbers to dichotomize the item sets and to select
the items to be examined. The algorithm is a generalization of Fibonacci search~\cite{Fe60}
to $(\alpha,\beta)$ Fibonacci sequences.

\subsection{The Algorithm (short version)}
\label{sec:alg-short}
\begin{description}
\item[Algorithm $(\alpha,\beta)$ Fibonacci search] \item [Input:]
 A sorted array $V$ with $n$ items and a requested item $x$ in $V$.
 The items in V are indexed from $0$ to $n-1$.
\item [Step 0:] {\em $(\alpha,\beta)$ Fibonacci numbers.} Prepare the $(\alpha,\beta)$ Fibonacci numbers up to index $k$ such that $G(k) \geq n$.
 \item [Step 1:] {\em Initializations.}
 Let $z = k - \alpha$, $left = 0$, $right = n - 1$.
\item [Step 2:] {\em Search Loop.} \\
 While $(left < right)$
  \begin{enumerate}[(a)]
    \item $index = left + G(z);$ if $(index > right)$ then $\{index = right\};$
    \item $value = v[index]$
    \item Compare $(x \leq value)$
    \begin{description}
       \item [true:] $right = index; z = z - \alpha;$
       \item [false:] $left = index + 1; z = z - \beta;$ % if $(left > right)$ then $\{left = right\};$
       \end{description}
    \end{enumerate}
 \item [Step 3:] Check if $(V[index] == x).$ {\em This step is only necessary if the requested item x may not belong to V.}
 \end{description}

\noindent
First the algorithm prepares the sequence of $G(k)$ numbers up to
the first term $G(k)$ of the sequence such that $G(k) \geq n$.
Then, the algorithm enters the search loop. In each round, the algorithm
selects either the left subtree or the right subtree of the implicit
$(\alpha,\beta)$ decision tree. If the current subtree has $G(z)$ leaves
then the number of leaves of the left tree is $G(z-\alpha)$ and of the
right subtree $G(z-\beta)$. If the current subtree has less than $G(z)$ leaves
(if $n$ is not an $(\alpha,\beta)$ Fibonacci number),
then the right subtree will also have less than $G(z-\beta)$ leaves.

%This operation requires
%logarithmic time and space and has to be executed once for a particular search problem.
%
%Let $k$ be the argmin ....
%$g(k) = g(k-\alpha) + g(k-\beta)$.
%
%Select as ``middle'' item, the item at position left + $g(k-\alpha)$.
%Compare and select the appropriate part, etc.
%
%At each moment the algorithm has a sequence of leaves.
%The algorithm continues the sequence contains only one leaf (left and
%right bound of the sequence coincide).
%The algorithm may stop earlier than the last level, if the node has only
%one child. This improves the average cost. The worst-case cost is
%equal to the lowest level.

%Let $\ell = \min\{\alpha,\beta\}$. We distinguish two cases:
%\begin{description}
%\item [The Simple Case:] $\ell = 1$.
%In this case we can focus on the nodes of the last layer. For each node in layer
%$k$ there is a corresponding node in layer $k+1$.
%\item [The General Case:] $\ell \geq 1$. In this case, the nodes with cost at most
%$k$ are the nodes of the last $\ell$ levels.
%\end{description}

%\subsection{The Simple Case}

\begin{figure}[!ht]
\centering
\subfloat[The complete $(3,1)$ decision tree with level 10.
The maximum number of leaves of a $(3,1)$ decision tree with level 10 is equal to the number of nodes at
level 10 of the depicted tree.]
%{\label{fig:hyb1}\includegraphics[viewport=40 400 760 750,width=7.3cm,height=4cm, clip]{figureHyb1.eps}}
{\label{fig:abFib31}\includegraphics[width=0.9\textwidth]{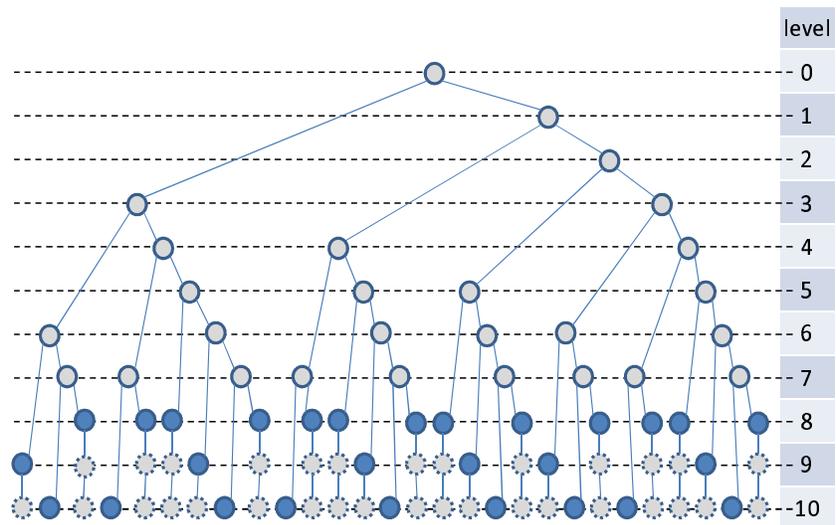}}
%\hspace{0.02 \textwidth}

\subfloat[The complete $(3,2)$ decision tree with level 10.
The maximum number of leaves of a $(3,2)$ decision tree with level 10 is equal to the number of nodes at
levels 9 and 10 of the depicted tree.]
%{\label{fig:hyb2}\includegraphics[viewport=40 400 760 750,width=7.3cm,height=4cm, clip]{figureHyb2.eps}}
{\label{fig:abFib32}\includegraphics[width=0.9\textwidth]{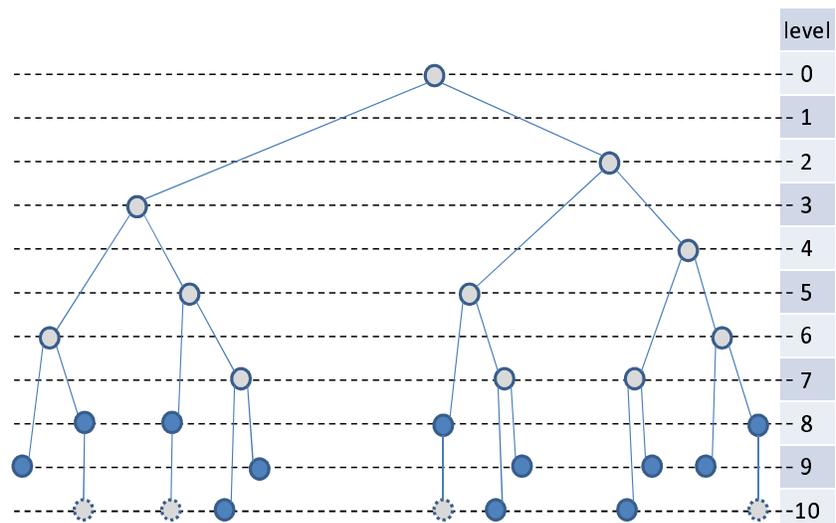}}
\caption{Two instances of $(\alpha,\beta)$ Fibonacci trees with level 10.
The dark nodes are the nodes at which the $(\alpha,\beta)$ Fibonacci search algorithm terminates.} \label{fig:abFibTree}
\end{figure}

\subsection{Complexity and Cost}

From the description of the $(\alpha,\beta)$ Fibonacci search algorithm
and the lower bound of Section~\ref{sec:bound}, it follows that the
worst-case cost of the algorithm is optimal.

\begin{proposition}
The worst-case cost of the $(\alpha,\beta)$ Fibonacci search algorithm
for the $(\alpha,\beta)$ binary search problem is optimal.
\end{proposition}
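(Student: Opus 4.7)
The plan is to match the upper bound of the algorithm with the lower bound
already established in Section~\ref{sec:bound}. Since the proposition at the
end of that section asserts that any $(\alpha,\beta)$ binary search procedure
has worst-case cost at least $k$, where $k$ is the minimum index with
$G(k)\ge n$, it suffices to show that the $(\alpha,\beta)$ Fibonacci search
algorithm incurs worst-case cost at most $k$; the proposition then follows by
combining the two bounds.

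I would argue via a structural correspondence between the algorithm's execution
and an $(\alpha,\beta)$ decision tree. Specifically, I would interpret each
reachable configuration of the loop as a node of an implicit
$(\alpha,\beta)$ decision tree, in which the level of a node equals the cost
already accumulated on the path from the root. Using the split rule
$\mathit{index} = \mathit{left} + G(z)$ together with the decrements
$z\leftarrow z-\alpha$ on \textsf{true} and $z\leftarrow z-\beta$ on
\textsf{false}, this implicit tree is a subtree of the complete
$(\alpha,\beta)$ decision tree whose root has level $0$ and whose leaves lie
at level at most $k$. The correspondence is justified by a loop invariant
asserting that, at the top of every iteration, the current active subarray has
size at most $G(z+\alpha)$: this holds initially because $z+\alpha = k$ and
$n\le G(k)$, and is preserved by both branches via the defining recurrence
$G(z+\alpha) = G(z) + G(z+\alpha-\beta)$ of the $(\alpha,\beta)$ Fibonacci
sequence $G$.

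Once the correspondence is in place, the accounting is immediate: every
\textsf{true} comparison pays cost $\alpha$ and descends one left edge of
the decision tree, every \textsf{false} comparison pays cost $\beta$ and
descends one right edge, so the total cost of any execution equals the level
of the leaf at which the loop terminates. Since all leaves of the implicit
tree sit at level at most $k$, the worst-case cost of the algorithm is
bounded by $k$; combined with the matching lower bound, optimality follows.

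The main obstacle I foresee is the careful handling of the clamping
$\mathit{index}\leftarrow\mathit{right}$, which is triggered precisely when
$n$ is not an $(\alpha,\beta)$ Fibonacci number. In such rounds the
\textsf{true} subtree is strictly smaller than the complete subtree of level
$k-\alpha$, and one must verify that the resulting pruned tree still respects
the size invariant and that the loop still reaches a leaf within the promised
budget. I expect this to reduce to a short case analysis using monotonicity of
$G$ and the observation that clamping can only decrease the sizes of both
subranges, so that the inductive step carries through unchanged.
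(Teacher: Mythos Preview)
Your approach is the same as the paper's: both argue that the algorithm's execution tree coincides with the $(\alpha,\beta)$ decision tree used in the lower-bound section, so the worst-case cost equals the optimal level $k$. The paper's own proof is a two-sentence assertion of exactly this correspondence and does not spell out the loop invariant or the clamping analysis that you outline; your plan is simply a more detailed version of the same idea.
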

\begin{proof}
The decision tree of the search algorithm is the $(\alpha,\beta)$ decision
tree used in the proof of the lower bound in Section~\ref{sec:bound}.
Hence, the worst-case performance matches the lower bound.
\end{proof}

\noindent
We will show that the computational complexity of the $(\alpha,\beta)$ Fibonacci
search algorithm is logarithmic on the number $n$ of the items.

\begin{proposition}
\label{prop:abFibLevel}
Given an $(\alpha,\beta)$ binary search problem, the level of the implicit
$(\alpha,\beta)$ decision tree of algorithm $(\alpha,\beta)$ Fibonacci search
is at most $u \cdot \lceil\log_2 n\rceil$.
\end{proposition}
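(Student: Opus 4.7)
By Step~0 of the algorithm, the level of the implicit tree equals the smallest integer $k$ such that $G(k) \geq n$. So it suffices to exhibit some $k \leq u\lceil\log_2 n\rceil$ satisfying $G(k) \geq n$; equivalently, to show $G(u\lceil\log_2 n\rceil) \geq n$. A direct substitution of $k = u\lceil\log_2 n\rceil$ into the lower bound of Proposition~\ref{prop:approxG()} only yields $G(k) \geq 2^{\lceil\log_2 n\rceil - 1 + 1/u}$, which falls short of $n$ by a multiplicative factor close to $2^{1-1/u}$. So a sharper argument, tight at multiples of $u$, is needed.

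I would prove the following sharper statement by induction on $m$:
\[ G(um) \geq 2^m \qquad \text{for every integer } m \geq 0. \]
The base case $m = 0$ is immediate from $G(0) = 1$. For the inductive step, apply the recurrence
\[ G(u(m+1)) \;=\; G\bigl(u(m+1)-\alpha\bigr) \;+\; G\bigl(u(m+1)-\beta\bigr). \]
Because $\alpha \leq u$ and $\beta \leq u$, both indices on the right are at least $um$, and one of them equals $um$ exactly (whichever of $\alpha, \beta$ is equal to $u$). Combining the inductive hypothesis $G(um) \geq 2^m$ with monotonicity of $G$, each of the two summands is at least $2^m$, hence $G(u(m+1)) \geq 2^{m+1}$. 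Instantiating with $m = \lceil\log_2 n\rceil$ gives $G(u\lceil\log_2 n\rceil) \geq 2^{\lceil\log_2 n\rceil} \geq n$, finishing the proof.

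The only auxiliary ingredient is monotonicity of the sequence $G$, which I would dispatch as a small side lemma. Subtracting consecutive instances of the recurrence gives
\[ G(k) - G(k-1) \;=\; \bigl[G(k-\alpha) - G(k-1-\alpha)\bigr] + \bigl[G(k-\beta) - G(k-1-\beta)\bigr], \]
so non-negativity of consecutive differences propagates by induction once it is verified on the initial segment $0 \leq k < u$, where $G$ is constantly $1$ and $G(0) - G(-1) = 1$. The induction itself is elementary; the main subtlety of the argument is recognizing that the generic closed-form estimate of Proposition~\ref{prop:approxG()} is too lossy for the stated bound and has to be replaced by a term-by-term argument aligned to the arithmetic progression $\{0, u, 2u, \dots\}$.
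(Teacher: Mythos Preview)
Your argument is correct. The induction $G(um)\geq 2^m$ together with monotonicity of $G$ does yield $G(u\lceil\log_2 n\rceil)\geq n$, and your treatment of monotonicity via the difference recurrence is sound (one only needs to note that the recurrence for $G$ is valid for $k\geq\ell$, so the difference recurrence holds for $k\geq\ell+1$; the base differences on $k\leq\ell$ are all nonnegative, as you indicate).

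However, the paper takes a genuinely different and shorter route. Rather than analyzing the sequence $G$ directly, it compares against ordinary binary search: binary search locates any item in at most $\lceil\log_2 n\rceil$ comparisons, each of which costs at most $u$ in the $(\alpha,\beta)$ model, so its worst-case cost is at most $u\lceil\log_2 n\rceil$. Since the preceding proposition already establishes that $(\alpha,\beta)$ Fibonacci search has \emph{optimal} worst-case cost, its level cannot exceed that of binary search, and the bound follows in one line. The trade-off is that the paper's argument is slick but leans on the optimality result, while yours is self-contained and gives a reusable combinatorial inequality on $G$ that does not depend on any algorithmic comparison.
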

\begin{proof}
(Sketch.)
Recall that $u = \max \{ \alpha, \beta \}$.
The worst-case complexity of binary search for the $(\alpha,\beta)$ binary search problem
is $\lceil\log_2 n\rceil = O(\log n)$. Hence, the worst-case cost of binary search is at most
$u \cdot \lceil\log_2 n\rceil$. The worst-case cost of $(\alpha,\beta)$ Fibonacci search is
optimal and thus not larger that $u \cdot \lceil\log_2 n\rceil$.
\end{proof}

\begin{corollary}
\label{cor:depthABTree}
The depth of the $(\alpha,\beta)$ decision tree is not larger than $ \lceil u/\ell \rceil \lceil\log_2 n\rceil$.
\end{corollary}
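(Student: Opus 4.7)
The plan is to bound the depth by comparing it to the level, using the fact that every edge in an $(\alpha,\beta)$ decision tree contributes a weight of either $\alpha$ or $\beta$ to the level of its child.

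First I would recall the distinction between depth and level from the earlier definition: the depth of a node counts the number of edges along the root-to-node path, whereas the level sums the weights $\alpha$ or $\beta$ assigned to those edges. Since every edge weight is at least $\ell = \min\{\alpha,\beta\}$, any node at depth $d$ must sit at level at least $d \cdot \ell$. Equivalently, for any node of the tree, its depth is at most its level divided by $\ell$, and the same inequality carries over to the maxima, so $\mathrm{depth}(T) \leq \mathrm{level}(T)/\ell$.

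Next I would plug in the bound already established in Proposition~\ref{prop:abFibLevel}, namely that the level of the implicit $(\alpha,\beta)$ decision tree of the search algorithm is at most $u \cdot \lceil \log_2 n \rceil$. Combining the two inequalities yields
\begin{equation}
\nonumber
\mathrm{depth}(T) \;\leq\; \frac{u \cdot \lceil \log_2 n \rceil}{\ell} \;\leq\; \left\lceil \frac{u}{\ell} \right\rceil \lceil \log_2 n \rceil ,
\end{equation}
where the second inequality uses only $u/\ell \leq \lceil u/\ell \rceil$. Since the depth is an integer, this upper bound is valid as stated.

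There is no real obstacle here; the only subtle point is making sure the reader does not confuse depth (unweighted) with level (weighted), so I would emphasize that distinction up front. The argument is essentially a single division of the level bound by the smallest possible edge weight $\ell$, followed by a trivial rounding step.
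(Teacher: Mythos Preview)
Your proof is correct and follows essentially the same approach as the paper: both argue that every step of the search (equivalently, every edge of the tree) adds at least $\ell$ to the level, so the depth is at most the level divided by $\ell$, and then invoke Proposition~\ref{prop:abFibLevel} for the level bound. The paper states this in one sentence and also remarks that a slightly weaker bound could be derived from Proposition~\ref{prop:approxG()}, but your more explicit version is the same argument.
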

\begin{proof}
Follows from Proposition~\ref{prop:abFibLevel} since every round of the search procedure increases
the level (cost) of the search by at least $\ell$. A slightly weaker result follows from
Proposition~\ref{prop:approxG()}.
\end{proof}

\begin{proposition}
The complexity of the $(\alpha,\beta)$ Fibonacci search algorithm is at most logarithmic
on the number of items $n$.
\end{proposition}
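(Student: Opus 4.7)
The plan is to bound the work done in each step of the algorithm separately and then sum the contributions, showing each piece is at most $O(\log n)$.

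First, I would analyze Step~2, the main search loop, since this is where the bulk of the computation occurs. Each iteration performs only a constant number of arithmetic operations, one array access, one comparison, and a pair of variable updates, so the iteration cost is $O(1)$. The total work in Step~2 is therefore proportional to the number of iterations, which equals the depth of the root-to-leaf path that the algorithm traces in the implicit $(\alpha,\beta)$ decision tree. By Corollary~\ref{cor:depthABTree}, that depth is at most $\lceil u/\ell \rceil \lceil \log_2 n \rceil$, which is $O(\log n)$.

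Next, I would handle the preprocessing in Step~0, which tabulates the values $G(0), G(1), \dots, G(k)$ up to the smallest index $k$ with $G(k) \geq n$. The key observation is that $k$ itself is only logarithmic in $n$: from the lower bound $G(k) \geq 2^{(k+1-u)/u}$ of Proposition~\ref{prop:approxG()}, any $k \geq u \log_2 n + u - 1$ suffices, so the minimum such $k$ is $O(\log n)$. Each term of $G$ can be produced in $O(1)$ time from previously tabulated terms using the recurrence $G(k)=G(k-\alpha)+G(k-\beta)$, so Step~0 contributes $O(\log n)$ total work. Steps~1 and~3 are plainly $O(1)$, so summing all contributions gives a running time of $O(\log n)$.

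The main thing to be careful about is that the preprocessing in Step~0 does not dominate the complexity, since a priori one could worry that computing $G(k)$ requires iterating over indices far beyond $\log n$. This concern is dispelled precisely by Proposition~\ref{prop:approxG()}, whose lower bound forces $k = O(\log n)$. Beyond that, the argument is essentially a bookkeeping exercise.
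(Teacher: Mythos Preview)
Your proof is correct and follows essentially the same approach as the paper: bound the number of search-loop iterations via Corollary~\ref{cor:depthABTree} and bound the preprocessing cost by showing that the index $k$ needed in Step~0 is $O(\log n)$. The only minor difference is that you bound $k$ directly from the lower bound on $G(k)$ in Proposition~\ref{prop:approxG()}, whereas the paper obtains the same conclusion via Proposition~\ref{prop:abFibLevel} (comparison with ordinary binary search); both routes are equally valid and yield the same $O(\log n)$ overall complexity.
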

\begin{proof}
The algorithm first prepares the $(\alpha,\beta)$ Fibonacci sequence up to the first $k$,
such that $G(k) \geq n$. From Proposition~\ref{prop:abFibLevel}, the level of the corresponding
decision tree is bounded by $u \cdot \lceil\log_2 n\rceil$, and, thus, by $O(\log n)$, assuming
$u$ is a constant. From Corollary~\ref{cor:depthABTree} we obtain that
$k \leq \lceil u/\ell \rceil \cdot \lceil\log_2 n\rceil = O( \log n )$.
The main phase of the algorithm is the search loop. The loop is executed not more than
$\lceil u/\ell \rceil \cdot \lceil\log_2 n\rceil$ times, since in each loop the depth
of the remaining tree is reduced by at least one.
Thus, the overall complexity is $O(\lceil u/\ell \rceil \cdot \lceil\log_2 n\rceil) = O(\log n)$.
Note that a factor $u$ is hidden in the preparation time of the algorithm (calculation
of the terms of $G(k)$) and a factor $u/\ell$ in the complexity of each item search.
\end{proof}

\subsection{The Algorithm (full version)}
We present now the full version of the $(\alpha,\beta)$ Fibonacci search algorithm
which is also average-case optimal, assuming that all items of the population are equiprobable.
When $n = |V|$ is an $(\alpha,\beta)$ Fibonacci number, then the lowest layer of the
corresponding search tree is complete and the short version of Section~\ref{sec:alg-short}
is optimal in the average case too. If, however, $n$ is not an $(\alpha,\beta)$ Fibonacci number, then
some nodes of the lowest layer of the complete search tree are not used. To make the search
optimal in the average case, the surplus nodes have to be assigned to highest cost nodes
of the search tree.
Let $k$ be the level of the corresponding search tree. Then the number of surplus nodes
cannot exceed $g(k)$ because else the lowest of the tree could be avoided.
We assume that all surplus nodes are assigned to the lowest layer from right to left.
The full version of $(\alpha,\beta)$ Fibonacci search traverses the implicit decision
tree down to the corresponding leaf. During its descendance, it simply keeps track how
many surplus nodes correspond to its current subtree. The code of the algorithm becomes
a little more involved, but its asymptotic complexity remains the same. Due to lack space,
we omit the detailed description of the algorithm.

%\section{Engineering}
\section{Applications}
\label{sec:Apps}
Fibonacci sequences can be generalized to sequences of higher order.
For example, the common Fibonacci sequence of order three is the Tribonacci sequence.
Further members in increasing order are the Tetranacci, Pentanacci and Hexanacci sequences~\cite{wikipedia:Generalizations_of_Fibonacci_numbers}.
Since $(\alpha,\beta)$ Fibonacci search fits perfectly the decision tree of the $(\alpha,\beta)$
binary search problem we are tempted to examine what happens if we turn to higher order
Fibonacci sequences.

\subsection{A Guessing Game}
\label{subsec:game}
The generalization of the asymmetric cost binary search problem to searching with more outcomes
per comparison gives the following guessing game.

\begin{definition}
The h-outcomes guessing game. An array $V$ with $n$ items sorted in increasing order and access time
$O(1)$ for each item, is given. To search in an interval I, the following comparison operation
is supported: The interval I has to be partitioned into a partition P of h consecutive intervals defined
with h-1 values $v_1 \leq v_2 \leq \dots \leq v_{h-1}$. For any value $x$ and any partition P,
the cost of finding in which of the intervals of P item $x$ belongs, is
\begin{center}
\begin{math}
\mbox{cost of (in which interval of P is x)} = w_i, \mbox{ if the searched item is in interval i, for } i = 0,1,\dots, h-1,
\end{math}
\end{center}
where the weights or costs $w_i$, for $i = 0,1, \dots, h-1$, are strictly positive integers.
The cost of a search is equal to the sum of the costs of all comparison
operations performed during the search. As in the binary search case, the requested
item $x$ is assumed to belong to the array $V,$ else the requested item has to be compared
at the end of the search with the item found.
\end{definition}

The results of Section~\ref{sec:bound} on the lower bound of the two outcomes $(\alpha,\beta)$
binary search can be extended to obtain h-decision trees this search problem with $h$ outcomes.
Similarly, it is straightforward to generalize the $(\alpha,\beta)$ Fibonacci search algorithm
of Section~\ref{sec:alg} to the h-Fibonacci search algorithm that optimally solves the h-outcomes
search problem. Due to lack space, we omit the detailed description and proofs of these results.

\subsection{Coding with Unequal Letter Costs}
\label{subsec:coding}

The problem of searching with asymmetric costs is strongly related to
optimal prefix codes~\cite{Va71,GKY02}. The encoding of letters that
may have different lengths but occur with the same probability is
known as the Varn coding problem and can be solved in polynomial time.

The results of this work, provide simple strict lower bounds
on the worst-case cost of Varn codes with lengths that are integers
or rational multiples of each other.
In particular, an implicit description of an optimal Varn encoding or
an optimal Varn encoding with the alphabetic property (the additional
constraint that the encoding of the words must preserve their alphabetical
order) can be calculated in logarithmic time.
The h-Fibonacci decision tree provides a tight lower bound on the worst case
cost for the encoded form and the h-Fibonacci search algorithm gives an optimal
coding of the letters that matches the lower bound.
Then, the encoding of any particular word can also be calculated
in logarithmic time. This might be useful if the domain (the item set
to be encoded) is very large, for example exponential on some problem parameter.
In this case we may generate an implicit encoding, given any specific
(possibly implicit) ordering of all words and then generate only the words
that are actually used.

%If the word have the same probability (Varn coding) ...
%
%An encoding can be calculated in XXXXX time and implicitly described.
%
%Given any order of the words to be encoded ...

\section{Discussion}
\label{sec:disc}
We consider the $(\alpha,\beta)$ Fibonacci search algorithm and its generalization,
the h-Fibonacci search algorithm, fundamental tools that can be used within the
solutions of other computational problems. For this reason, we developed a
prototype implementation of the two algorithms within a standard Java class library
that can be found at~\cite{abFib}.
%Design of an abstract search algorithm ABFibSearch and implementation of a corresponding Java class.

Back to the airbag example of Section~\ref{sec:intro}, the interval of uncertainty
contains $101$ distinct values $(10.0,10.1,\dots,20.0)$. We use the implemented
$(\alpha,\beta)$ Fibonacci search algorithm with $\alpha = 1$ and $\beta = 3$
to find that the level of the corresponding $(\alpha,\beta)$ decision tree is $14$
(since G(13) = 88 and G(14) = 129, for $\alpha=1$ and $\beta = 3$). Thus,
the worst-case cost is $500 \cdot 14 = $ {7000\euro}. The corresponding tree
generated by normal binary search has worst-case depth $\lceil \log_2 101 \rceil = 7$
and thus, the cost of the binary search solution can be up to $7 \cdot 1500 =${10500\euro}.
Since the lowest layer of the binary search tree is not complete, this cost can be reduced
to $6 \cdot 1500 + 1 \cdot 500 = $ {9500\euro} by using the $101$ leftmost leaves (because
in this case the right branches have the higher cost).
%XXXXX A similar trick can be used in the $(\alpha,\beta)$ Fibonacci search algorithm too,
%in this case to (possibly) improve the average behavior of the algorithm (since the worst-case
%performance is already optimal).

The difference between the costs $7000$ and $9500$, albeit not colossal, is
without doubt significant. Preliminary experiments indicate the, rather expected, result,
that the benefits of using $(\alpha,\beta)$ Fibonacci search increase as the
ratio $u/\ell$ increases. The experiments also indicate that the benefits in the average
case cost of the search problem if $(\alpha,\beta)$ Fibonacci search is used, are still
significant, albeit smaller than with the worst-case criterion.

A further interesting question is if the results of this work can be generalized
to support real positives $\alpha$ and $\beta$, possibly with the use of the Pascal's
triangle like in~\cite{CW77} or techniques from~\cite{CG01}.

\end{document}